\def\verdate{10/06/2016}\def\ver{0.46}

\documentclass[11pt,a4paper]{article}

\usepackage{fullpage}
\usepackage{graphicx}
\usepackage{subfigure}
\usepackage{amsmath,amssymb}
\usepackage{dsfont}
\usepackage{hyperref}
\usepackage{cite}
\usepackage{color}
\usepackage{tikz}

\newtheorem{definition}{Definition}
\newtheorem{theorem}{Theorem}
\newtheorem{lemma}{Lemma} 
\newcommand{\eg}{\emph{e.g.\/}}
\newcommand{\ie}{\emph{i.e.\/}}
\newcommand{\etal}{\emph{et al.}}
\DeclareMathOperator{\tr}{tr}
\newcommand{\Tr}{\tr}
\newcommand{\C}{\ensuremath{\mathds{C}}}
\newcommand{\Cplx}{\ensuremath{\C}}
\newcommand{\bbC}{\ensuremath{\Cplx}}
\newcommand{\ket}[1]{\ensuremath{|#1\rangle}}
\newcommand{\bra}[1]{\ensuremath{\langle#1|}}
\newcommand{\ketbra}[2]{\ensuremath{\ket{#1}\bra{#2}}}
\newcommand{\braket}[2]{\ensuremath{\langle{#1}|{#2}\rangle}}
\newcommand{\1}{{\rm 1\hspace{-0.9mm}l}}
\newcommand{\Id}{\1}
\newcommand{\ii}{\mathrm{i}}
\DeclareMathOperator{\diag}{diag}

\newenvironment{proof}[1][Proof]{\noindent\textbf{#1.} }{\hfill 
\rule{0.5em}{0.5em}}




\title{Lively quantum walks on cycles}

\author{Przemys{\l}aw Sadowski${}^1$ \quad Jaros{\l}aw Adam Miszczak${}^{1,2}$ \quad Mateusz Ostaszewski${}^1$\footnote{Corresponding author: mostaszewski@iitis.pl} \\[12pt]
  ${}^1$Institute of Theoretical and Applied Informatics, Polish Academy of Sciences,\\
  Ba{\l}tycka 5, 44-100 Gliwice, Poland\\[6pt]
  ${}^2$Applied Logic, Philosophy and History of Science group,\\ 
University of Cagliari,\\ Via~Is Mirrionis 1, 09123 Cagliari, Italy
}

\date{\verdate\ (v. \ver)}

\begin{document}

\maketitle

\begin{abstract}

We introduce a family of quantum walks on cycles parametrized by their 
liveliness, defined by the ability to execute a long-range move. We investigate 
the behaviour of the probability distribution and time-averaged probability 
distribution. We show that the liveliness parameter, controlling the magnitude 
of the additional long-range move, has a direct impact on the periodicity of 
the limiting distribution. We also show that the introduced model provides a 
method for network exploration which is robust against trapping.

\noindent \textbf{Keywords:} quantum walks, Markov processes, limiting distribution

\noindent\textbf{PACS:} 03.67.-a, 05.40.Fb, 02.50.Ga
\end{abstract}

\section{Introduction}
Quantum walks \cite{reitzner12walks,vanegas12review,portugal-book}, quantum
counterparts of classical Markov processes, provide a powerful method for
developing new quantum algorithms~\cite{ambainis03algorithmic} and
protocols~\cite{difranco11mimicking,mcgettrick14cycles,miszczak14quantum,sadowski14efficient,pawela15generalized}.
As quantum protocols have to be executed on pair with classical protocols
controlling distant parts of a quantum network, quantum walks have to include
elements enabling them to adapt to the current structure of the network. The
methods of adapting classical algorithms for the purpose of quantum networks are
currently under an active investigation~\cite{vanmeter13path} and include the
application of game theory in a complex quantum network with interacting parties
\cite{pawela13cooperative}.

Quantum walks on cycles can be used as a simple and very powerful model for the
purpose of modeling quantum and hybrid classical-quantum networks. In particular,
in~\cite{miszczak14quantum}, the authors have developed a model that can be used to
analyze the scenario of exploring quantum networks with a distracted sense of
direction. By using this model, it is possible to study the
behavior of quantum mobile agents operating with non-adaptive and adaptive
strategies that can be employed in this scenario.

The presented work introduces a family of quantum walks on cycles
with \emph{liveliness},
corresponding to the ability to execute a long-range
move. The introduced family is parametrized by the liveliness
parameter, which is used to control the magnitude of the additional long-range
move. In particular, the proposed family contains lazy quantum walks, which can
be introduced as quantum walks with liveliness equal to~0. We investigate the
behavior of the probability distribution and time-averaged probability
distribution~\cite{aharonov01walks} for the introduced family and generalize the
results obtained by Bednarska \etal~\cite{bednarska03walks}. We show that the
liveliness parameter has a direct impact on the periodicity of the limiting
distribution. We also show that the
introduced model provides a method for network exploration which is robust against trapping.

This paper is organized as follows.
In Section~\ref{sec:model} we introduce the model of lively quantum walks on
cycles.
In Section~\ref{sec:periodicity} we study the behavior of the time-averaged
limiting distribution of the introduced model and discuss its periodicity.
In Section~\ref{sec:liveliness} we prove that the introduced model allows the
improvement of the quantum network exploration. This is achieved by demonstrating that our model can be
used to avoid trapping and to counteract 
malfunctions in the network.
Finally, in Section~\ref{sec:concluding}, we discuss the possible applications
and extensions of the introduced model.

\section{Lively quantum walks}\label{sec:model}
Let us first consider a cycle with $n$ nodes and define a standard model of
quantum walk. The position of a walker during a quantum walk executed on such
cycle is described by a vector in $n$-dimensional complex space $\Cplx^n$. The
state space is of the form $\C^2\otimes\C^n$. Quantum walk process is defined in
the situation by the shift operator 
\begin{equation}
\begin{split}
S(\ket{0}\otimes\ket{i}) &= \ket{0}\otimes\ket{ i+1 \pmod n},\\
S(\ket{1}\otimes\ket{i}) &= \ket{1}\otimes\ket{i-1 \pmod n},
\end{split}
\end{equation}
or equivalently
\begin{equation}
S = \ketbra{0}{0}\otimes\sum\limits_{i=0}^{n-1}\ketbra{i+1}{i}+
\ketbra{1}{1}\otimes\sum\limits_{i=0}^{n-1}\ketbra{i-1}{i},
\end{equation}
where addition is modulo $n$.
In this case the walker has for her disposal two directions -- $\ket{0}$ 
(right) 
and $\ket{1}$ (left) -- represented in the two dimensional Hilbert space 
$\C^2$, 
corresponding to the coin used in the classical random walk. 
The evolution operator is defined as
\begin{equation}
U = S(C\otimes\Id_n),
\end{equation}
where $C$ is the coin operator which acts on coin space $\C^2$.

Let us now assume that the coin register used to control a quantum
walker is represented by a vector in $\Cplx^3$ (\ie\ by a qutrit) and thus, during
each step, the walker can change its position according to one of three possible
states of the coin register. Using this setup we define \emph{lively quantum
walk} on cycles as follows.

\begin{definition}[Lively quantum walk on a cycle]
Lively quantum walk on a
$n$-dimensional cycle with \emph{liveliness} $0\leq a\leq n-1$, is defined by 
the shift operator $S^{(n,a)}\in L(\Cplx^3\otimes\Cplx^n)$ of the form
\begin{equation}
S^{(n,a)} = \sum_{x=0}^{n-1} S_x^{(n,a)},\label{eq:shift}
\end{equation}
where
\begin{equation}
\begin{split}
S_x^{(n,a)} &=  \ketbra{0}{0}\otimes\ketbra{x-1 \pmod n}{x}\\
        &+ \ketbra{1}{1}\otimes\ketbra{x+1 \pmod n}{x} \\
        &+ \ketbra{2}{2}\otimes\ketbra{x+a \pmod n}{x}.
\end{split}\label{eq:shift-at-position}
\end{equation}
\end{definition}

For the case $a=0$ the above definition reduces to \emph{lazy quantum walk}
(\ie\ a quantum walk with no liveliness). 
One should also note that if the introduced model would included jumps with $-a$
parameter we could restrict our considerations to the case where $a\le\lfloor
\frac{n}{2} \rfloor$.

For the small number of nodes, the existence of the additional
connections can be used to model the transition from a cycle to the full
network. For example, for $n=6$, the lively quantum walk with $a=2$ is equivalent to
the quantum walk on the total network. However, in order to study the parametrized family of processes on graphs with different degree of connectivity, additional connections, and thus larger coins, are needed.

Operator $S_x^{(n,a)}$ acts on position $x$ by shifting it by $+1$, $-1$ or
by the value specified by the liveliness parameter $a\in\{0,1,\dots,n-1\}$. The case
$a=0$ is identical to the case $a=n$.


The coin operator used in the further considerations is defined by the
Grover operator
\begin{equation}
G = 2 \ketbra{c_1}{c_1}-\Id_3=\left(
\begin{smallmatrix}
	-\frac{1}{3} & \frac{2}{3} & \frac{2}{3} \\
	\frac{2}{3} & -\frac{1}{3} & \frac{2}{3} \\
	\frac{2}{3} & \frac{2}{3} & -\frac{1}{3} \\
\end{smallmatrix}\right),
\end{equation}
where 
\begin{equation}\label{eqn:init0}
	\ket{c_1}=\frac{1}{\sqrt{3}}\left(\ket{0}+\ket{1}+\ket{2}\right).
\end{equation}

Using the above we define the walk operator for the lively walk on cycle as
\begin{equation}
U=\left(\sum_{x=0}^{n-1}S_x^{(n,a)}\right) (G\otimes \Id_n).
\end{equation}

\section{Limiting distribution periodicity}\label{sec:periodicity}
We start with the proof of the periodicity of the limiting distribution for the
introduced model. Let us introduce the time-averaged probability distribution
for the quantum walk as follows.

\begin{definition}
We define the time-averaged probability distribution at position $x$ for a 
unitary process $U$ as
\begin{equation}
\Pi(x) = \lim_{N\rightarrow \infty} \frac{1}{N} \sum_{t=0}^N P(x, t),
\end{equation}
where $P(x,t)$ denotes the probability of measuring position $x$ after $t$ steps
\begin{equation}
P(x,t) =\sum_{c=0}^2 |\bra{c,x}U^t\ket{\psi_0}|^2
\end{equation}
and $\ket{\psi_0}$ is arbitrary initial state.
\label{def::limiting}
\end{definition}

Let us now consider a lively quantum walk with $n$ nodes and the step size $a$
chosen in such a way that there is a common divisor of both numbers.

\begin{theorem}\label{th:cyclic}
If $\mathrm{GCD}(a, n)>1$ then the limiting time-averaged probability
distribution is periodic with period equal to $\mathrm{GCD}(a, n)$.
\end{theorem}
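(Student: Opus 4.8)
The plan is to diagonalise the position register by the discrete Fourier transform, which turns the walk into a direct sum of $3\times3$ unitaries on the coin space, and then to read the periodicity of $\Pi$ off the structure of these blocks. Put $\zeta=e^{2\pi\ii/n}$ and $\ket{v_m}=\frac1{\sqrt n}\sum_{x=0}^{n-1}\zeta^{mx}\ket{x}$; a short computation on $S^{(n,a)}$ shows that each subspace $\Cplx^3\otimes\ket{v_m}$ is $U$-invariant, with $U$ acting there as $U_m=D_mG$, $D_m=\diag(\zeta^{m},\zeta^{-m},\zeta^{-ma})$, so $U=\bigoplus_{m=0}^{n-1}U_m$. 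Writing $U_m=\sum_j e^{\ii\lambda_{m,j}}\ketbra{\phi_{m,j}}{\phi_{m,j}}$, expanding $\ket{\psi_0}$ in the eigenbasis $\{\ket{\phi_{m,j}}\otimes\ket{v_m}\}$ of $U$, and performing the time-average (which retains only phase-coincident pairs), one obtains
\begin{equation}
\Pi(x)=\frac1n\sum_{\substack{m,m',j,j'\\ \lambda_{m,j}=\lambda_{m',j'}}}\zeta^{(m'-m)x}\,\braket{\phi_{m,j}}{\phi_{m',j'}}\,\overline{\alpha_{m,j}}\,\alpha_{m',j'},\qquad \alpha_{m,j}=\braket{\phi_{m,j}\otimes v_m}{\psi_0},
\label{eq:proofplan-cesaro}
\end{equation}
so that the whole dependence on $x$ sits in the phases $\zeta^{(m'-m)x}$.

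Set $d=\mathrm{GCD}(a,n)$ and $n=dq$. Since $d\mid a$ we have $\zeta^{qa}=e^{2\pi\ii a/d}=1$, so the last entry $\zeta^{-ma}$ of $D_m$ — and, for the same reason, the scalar $\zeta^{-md}$ by which the operator $\Id_3\otimes(\ket{x}\mapsto\ket{x+d})$, which commutes with $U$, acts on $\Cplx^3\otimes\ket{v_m}$ — depends on $m$ only modulo $q$. The heart of the argument is to show that every term of \eqref{eq:proofplan-cesaro} with $m\not\equiv m'\pmod q$ vanishes (or cancels in the sum). Granting that, each surviving term has $m'-m\in q\mathbb{Z}$, hence $\zeta^{(m'-m)x}=(\zeta^{q})^{(m'-m)x/q}$ is unchanged under $x\mapsto x+d$ because $(\zeta^{q})^{d}=\zeta^{n}=1$; therefore $\Pi(x)=\Pi(x+d)$ and $\Pi$ is $d$-periodic, and exactness of the period follows by exhibiting, for a generic $\ket{\psi_0}$, a nonzero coefficient in front of $\zeta^{qx}$.

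The main obstacle is exactly this reduction: controlling which blocks $U_m,U_{m'}$ can share an eigenvalue with non-orthogonal eigenvectors. A clean partial statement is available, since a $3\times3$ unitary is determined up to conjugacy by its trace and determinant, $\tr U_m=-\tfrac13(\zeta^{m}+\zeta^{-m}+\zeta^{-ma})$ and $\det U_m=\zeta^{-ma}$, so a full spectral coincidence forces $\det U_m=\det U_{m'}$, i.e.\ $q\mid(m-m')$. The delicate case is an isolated shared eigenvalue between two non-congruent blocks; here one uses the secular equation coming from $G=2\ketbra{c_1}{c_1}-\Id_3$, namely $\ket{\phi_{m,j}}\propto(\Id_3+z\,D_m^{-1})^{-1}\ket{c_1}$ with $z=e^{\ii\lambda_{m,j}}$ satisfying $\tfrac23\bigl(\tfrac1{1+z\zeta^{-m}}+\tfrac1{1+z\zeta^{m}}+\tfrac1{1+z\zeta^{-ma}}\bigr)=1$, plus a separate treatment of the finitely many degenerate situations in which two entries of $D_m$ coincide or the relevant eigenvector lies in $\ket{c_1}^{\perp}$. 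Verifying that in all these cases the offending cross-terms either vanish by orthogonality or cancel after summing over $m$ is the technically demanding part of the proof.
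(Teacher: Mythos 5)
Your proposal follows essentially the same route as the paper's proof: diagonalise the position register in the Fourier basis so that $U$ splits into $3\times3$ blocks $\diag(\cdot)G$ (this is Lemma~\ref{lem:eigenvalues}), observe that the time average retains only pairs of eigenvectors with coinciding eigenvalues and that the $x$-dependence of $\Pi$ sits entirely in the relative phases of the cross terms, and reduce the claim to showing that the only coincidences that contribute pair blocks whose momentum indices are congruent modulo $q=n/\mathrm{GCD}(a,n)$. The step you flag as the technically demanding one --- an isolated eigenvalue shared by two blocks with $m\not\equiv m'\pmod q$ --- is precisely the step the paper does not prove either: it assumes (around Lemma~\ref{lem:multi-eigenvalues}) that degeneracies arise only for $k$ a multiple of $2\pi/\mathrm{GCD}(a,n)$ and explicitly declares the remaining cases negligible, so your account is no less complete than the paper's, and your determinant observation ruling out full spectral coincidences between non-congruent blocks, as well as your remark that exactness of the period needs a separate argument (the paper only establishes $\Pi(x)=\Pi(x+a)$, hence a period dividing $\mathrm{GCD}(a,n)$), actually go slightly beyond what the paper records.
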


First, let us note that the spectrum of the walk operator is conveniently 
expressed using Fourier basis at the position register.

\begin{lemma}\label{lem:eigenvalues}
The walk operator $U = S^{(n,a)}(G\otimes\1_n)\in L(\bbC^3\otimes \bbC^n 
)$ 
has 
eigenvalues 
$\lambda_{k,j}$ with 
corresponding eigenvectors 
$\ket{\psi_{k,j}}=\ket{v_{k,j}}\otimes\ket{\phi_k}\in\bbC^3\otimes\bbC^n$ 
satisfying the 
equation
\begin{equation}\label{eq:g-diag-eigenstates}
\diag(e^{-\ii k},e^{\ii k},e^{-\ii ka})G\ket{v_{k,j}} = 
\lambda_{k,j}\ket{v_{k,j}},
\end{equation}
where $\ket{\phi_k}=\sum e^{-\ii kx}\ket{x}$ for $k = \frac{2\pi l}{n}, 
l=0,\ldots,n-1$, $j\in\{0,1,2\}$.
\end{lemma}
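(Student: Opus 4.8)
The plan is to exploit the fact that each of the three position-register pieces of the shift operator is a cyclic (circulant) shift, and that all circulant operators on $\bbC^n$ are simultaneously diagonalized by the Fourier basis $\{\ket{\phi_k}\}$ introduced in the statement. Concretely, I would show that $U=S^{(n,a)}(G\otimes\1_n)$ leaves each three-dimensional subspace $\bbC^3\otimes\ket{\phi_k}$ invariant, so that $U$ block-diagonalizes into $n$ blocks of size $3\times3$; the eigenvectors of the $k$-th block are exactly the coin vectors $\ket{v_{k,j}}$, and tensoring them with $\ket{\phi_k}$ produces the claimed product eigenvectors $\ket{\psi_{k,j}}$.

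First I would record the single computation that drives everything: writing a generic circulant hop by $s$ as $\sum_x\ketbra{x+s\!\!\pmod n}{x}$ and performing the index substitution $z=x+s$ on $\ket{\phi_k}=\sum_x e^{-\ii kx}\ket{x}$ gives
\begin{equation}
\Big(\sum_{x=0}^{n-1}\ketbra{x+s \pmod n}{x}\Big)\ket{\phi_k}=e^{\ii k s}\,\ket{\phi_k},
\end{equation}
valid because $k=2\pi l/n$ makes the phase single-valued modulo $n$. Since $G\otimes\1_n$ acts only on the coin and trivially preserves every sector $\bbC^3\otimes\ket{\phi_k}$, applying $S^{(n,a)}$ afterwards simply multiplies the coin branches $\ket{0},\ket{1},\ket{2}$ by the phases coming from their respective hops $-1,\,+1,\,a$ of \eqref{eq:shift-at-position}, while returning $\ket{\phi_k}$ unchanged. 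Collecting these phases into a diagonal matrix, I obtain the sector reduction
\begin{equation}
U\big(\ket{v}\otimes\ket{\phi_k}\big)=\big(D_k\,G\ket{v}\big)\otimes\ket{\phi_k},\qquad D_k=\diag\big(e^{-\ii k},e^{\ii k},e^{-\ii k a}\big),
\end{equation}
with $D_k$ the diagonal matrix appearing in \eqref{eq:g-diag-eigenstates}. Consequently $\ket{v_{k,j}}\otimes\ket{\phi_k}$ is an eigenvector of $U$ with eigenvalue $\lambda_{k,j}$ precisely when $D_kG\ket{v_{k,j}}=\lambda_{k,j}\ket{v_{k,j}}$, which is the asserted equation.

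Finally I would argue completeness, i.e. that these product vectors furnish a genuine eigenbasis and so exhaust the spectrum. The $n$ Fourier vectors $\ket{\phi_k}$ for $k=2\pi l/n$, $l=0,\dots,n-1$, are mutually orthogonal and span $\bbC^n$, hence the $n$ sectors $\bbC^3\otimes\ket{\phi_k}$ are orthogonal and together span $\bbC^3\otimes\bbC^n$. Within each sector $D_kG$ is a product of two unitaries, hence unitary, so it possesses three orthonormal eigenvectors $\ket{v_{k,0}},\ket{v_{k,1}},\ket{v_{k,2}}$; ranging over all $k$ yields $3n$ orthonormal eigenvectors of $U$ of the stated form, a full eigenbasis. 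I do not expect a serious obstacle here: the only point requiring care is the modular-arithmetic bookkeeping in the phase identity above (getting each $e^{\pm\ii k}$, $e^{\mp\ii ka}$ with the correct sign for its hop), and the observation that unitarity of $D_kG$ guarantees diagonalizability of every block, so that \eqref{eq:g-diag-eigenstates} really does capture the entire spectrum of $U$ rather than a proper subspace.
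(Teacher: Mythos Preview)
Your proposal is correct and follows essentially the same route as the paper: both compute the action of $S^{(n,a)}$ on $\ket{d}\otimes\ket{\phi_k}$, observe that each Fourier mode picks up only a phase so that $U$ restricts to $D_kG$ on the coin factor, and read off the product eigenvectors from there. Your write-up is in fact a bit more complete than the paper's, which stops after exhibiting the block reduction; you additionally argue that the $n$ three-dimensional sectors are orthogonal and span $\bbC^3\otimes\bbC^n$, and that unitarity of each block guarantees a full eigenbasis, so the product form captures the entire spectrum rather than merely some eigenvectors.
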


\def\zerodirection{0}
\def\oneodirection{1}
\def\twodirection{2}
\begin{proof}
We analyse the action of the step on $\ket{d}\otimes\ket{\phi_k}$ for basis 
state $d\in\{\zerodirection, \oneodirection, \twodirection \}$ and note that 
the step 
operator acting on states with Fourier states on position register results in 
a relative phase
$S^{(n,a)}\ket{\zerodirection}\otimes\ket{\phi_k}=e^{\ii 
k}\ket{\zerodirection}\otimes\ket{\phi_k}$,
$S^{(n,a)}\ket{\oneodirection}\otimes\ket{\phi_k}=e^{-\ii 
k}\ket{\oneodirection}\otimes\ket{\phi_k}$
and
$S^{(n,a)}\ket{\twodirection}\otimes\ket{\phi_k}=e^{-\ii 
ka}\ket{\twodirection}\otimes\ket{\phi_k}$.
Thus we can reduce the dynamics of the states of the form 
$\ket{v}\otimes\ket{\phi_k}$ so that we consider the $\ket{v}$ subsystem only 
and we substitute the step operator with $\diag(e^{-\ii k},e^{\ii k},e^{-\ii 
ka})$.
Therefore any eigenvector $\ket{v_{k,j}}$ of the form given in Eq. 
(\ref{eq:g-diag-eigenstates})
corresponds to an eigenvector of the form $\ket{v_{k,j}}\otimes\ket{\phi_k}$ of 
the walk operator $U$.
\end{proof}

In the context of the limiting distribution we emphasise the fact that 
for any eigenvector of the form 
$\ket{\psi_{k,j}}=\ket{v_{k,j}}\otimes\ket{\phi_k}$
the probability distribution at the coin register
\begin{equation}
|(\bra{d}\otimes\bra{x})\ket{\psi_{k,j}}|^2 = 
|\braket{d}{v_{k,j}}|^2\label{eqn:transition-invariant}
\end{equation}
is position independent. This property can be applied into limiting 
distribution formula
\begin{equation}
\Pi(x)=\sum_{\lambda}\sum_d\sum_{(k,j),(k',j')\in V_\lambda 
}a_{k,j}^{\phantom*} 
a_{k',j'}^* \braket{d,x}{\psi_{k,j}}\braket{\psi_{k',j'}}{d,x},
\end{equation}
where $x$ is the position, $\ket{\psi_0}=\sum a_{k,l} \ket{\psi_{k,l}}$, 
$V_\lambda$ are
indices of $\lambda$-eigenvectors such that 
$V_{\lambda}=\{(k,j):\lambda_{k,j}=\lambda\}$.
It is straightforward from Eq.~(\ref{eqn:transition-invariant}) that for 
1-dimensional eigenspaces the probability is 
transition invariant
\begin{equation}
a_{k,j}^{\phantom\dagger} a_{k,j}^* 
\braket{d,x}{\psi_{k,j}}\braket{\psi_{k,j}}{d,x} =
a_{k,j}^{\phantom\dagger} a_{k,j}^* |\braket{d}{v_{k,j}}|^2
=
a_{k,j}^{\phantom\dagger} a_{k,j}^* 
\braket{d,x'}{\psi_{k,j}}\braket{\psi_{k,j}}{d,x'},
\end{equation}
for $x,x'=1,\dots,n$. For higher-dimensional eigenspaces we are concerned with
relative phase during transition. In other words, one is assured that for two
positions $x, x'$ the modules of the terms
$\braket{d,x}{\psi_i}\braket{\psi_j}{d,x}$ and
$\braket{d,x'}{\psi_i}\braket{\psi_j}{d,x'}$ are equal, however they may differ
in phase. Here we prove that the dimensionality of eigenspaces of eigenvalues
$\lambda_{k,j}$ is higher than one if the relation $k = \frac{2\pi}{n}l$ is
satisfied for $l$ being the multiplication of $\frac{n}{\mathrm{GCD}(n,a)}$. We do
not prove that for $k \ne \frac{2\pi l}{\mathrm{GCD}(n,a)}$ the
$\lambda_{k,j}$-eigenspace is one-dimensional, but the influence of the cases
when it is not true is negligible.

\begin{lemma}\label{lem:multi-eigenvalues}
For $k=\frac{2\pi l}{n}$ and $\frac{n}{\mathrm{GCD}(n,a)}|l$ we have that 
$\lambda_{k,0}=1$ is 
an 
eigenvalue of $U$ and the
other eigenvalues 
$\lambda_{k,1}$, $\lambda_{k,2}$ are mutually conjugated i.e. 
$\lambda_{k,1}\lambda_{k,2}=1$.
Moreover, eigenvalues for $k'=2\pi-k$ are the same.
\end{lemma}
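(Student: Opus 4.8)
The plan is to reduce everything to the $3\times3$ matrix furnished by Lemma~\ref{lem:eigenvalues}. By that lemma the triple $\lambda_{k,0},\lambda_{k,1},\lambda_{k,2}$ is precisely the spectrum of $M_k:=\diag(e^{-\ii k},e^{\ii k},e^{-\ii ka})G$, so each of the three assertions becomes a statement about $M_k$ for the relevant values of $k$. First I would cash in the arithmetic hypothesis: writing $g=\mathrm{GCD}(n,a)$ and $l=\tfrac{n}{g}m$ with $m\in\mathbb{Z}$, one gets $ka=\tfrac{2\pi l a}{n}=2\pi m\,\tfrac{a}{g}$, and since $g\mid a$ the factor $\tfrac{a}{g}$ is an integer; hence $e^{-\ii ka}=1$ and $M_k=\diag(e^{-\ii k},e^{\ii k},1)G$.

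The main step is then to produce an eigenvector of this $M_k$ with eigenvalue $1$. Using $G=2\ketbra{c_1}{c_1}-\1_3$, the relation $M_k\ket v=\ket v$ is equivalent to $G\ket v=\diag(e^{\ii k},e^{-\ii k},1)\ket v$, i.e.\ to $2\braket{c_1}{v}\ket{c_1}=(\1_3+\diag(e^{\ii k},e^{-\ii k},1))\ket v$. For $k\neq\pi$ a direct check shows that $\ket{v_{k,0}}=(1+e^{-\ii k})\ket0+(1+e^{\ii k})\ket1+(1+\cos k)\ket2$ solves this, the only nontrivial point being the identity $e^{\ii k}+e^{-\ii k}=2\cos k$. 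The single exceptional value $k=\pi$ (which can genuinely occur, e.g.\ $n=4$, $a=2$, $l=2$) is handled separately by taking $\ket{v_{k,0}}=\ket0-\ket1\perp\ket{c_1}$, for which $G\ket{v_{k,0}}=-\ket{v_{k,0}}=\diag(-1,-1,1)\ket{v_{k,0}}$. Either way $\lambda_{k,0}=1$.

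The remaining claims are then formal. The matrix $M_k$ is unitary (a product of the unitary $D=\diag(e^{-\ii k},e^{\ii k},1)$ and the orthogonal $G$) and $\det M_k=\det D\cdot\det G=1\cdot1=1$, so $\lambda_{k,0}\lambda_{k,1}\lambda_{k,2}=1$ and hence $\lambda_{k,1}\lambda_{k,2}=1$; since moreover $|\lambda_{k,1}|=|\lambda_{k,2}|=1$, this forces $\lambda_{k,2}=\overline{\lambda_{k,1}}$, i.e.\ a conjugate pair. Finally, since $G$ is real and the diagonal prefactors at $k$ and $2\pi-k$ are complex conjugates of each other, $M_{2\pi-k}=\overline{M_k}$, so $\mathrm{spec}(M_{2\pi-k})=\overline{\mathrm{spec}(M_k)}=\{1,\overline{\lambda_{k,1}},\overline{\lambda_{k,2}}\}=\{1,\lambda_{k,1},\lambda_{k,2}\}$, which is the last assertion.

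The only place that demands care is the eigenvalue-$1$ step, and even there the obstacle is merely to notice and quarantine the degenerate case $k=\pi$ before solving the linear system; the arithmetic reduction, the determinant computation, and the $k\mapsto2\pi-k$ symmetry are all routine.
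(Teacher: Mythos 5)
Your proof is correct, and it reaches the lemma by a somewhat different route than the paper, even though both arguments begin identically: reduce via Lemma~\ref{lem:eigenvalues} to the $3\times3$ matrix $\overline{U_k}=\diag(e^{-\ii k},e^{\ii k},e^{-\ii ka})G$ and use the divisibility hypothesis to get $e^{\ii ka}=1$ (your $l=\tfrac{n}{g}m$ computation is the same as the paper's observation $n\mid la$). From there the paper computes the characteristic polynomial of $\overline{U_k}$ explicitly, reads off the factor $(1-\lambda)$, and argues from the real-coefficient quadratic factor (which depends on $k$ only through $\cos k$) that the remaining pair is conjugate and that $k'=2\pi-k$ gives the same spectrum; it then records closed-form eigenvalues and eigenvectors in terms of $\omega=e^{\ii k}$. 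You instead exhibit an explicit $\lambda=1$ eigenvector, get $\lambda_{k,1}\lambda_{k,2}=1$ from $\det(\overline{U_k})=1$, conjugacy from unitarity, and the $k\mapsto 2\pi-k$ statement from $\overline{U_{2\pi-k}}=\overline{\overline{U_k}}$ together with the spectrum being closed under conjugation. Your structural version buys two things: it sidesteps the explicit polynomial computation (where the paper's displayed factor $\tfrac{\lambda}{3}(1-2\cos k)+1+\lambda+\lambda^2$ appears to carry a sign slip --- at $k=0$ the quadratic factor must be $(\lambda+1)^2$, which requires $1+2\cos k$), and it explicitly quarantines the degenerate value $k=\pi$, which can indeed occur under the hypotheses and at which the paper's closed-form eigenvectors (with denominators $\omega+1$) break down, even though its $(1-\lambda)$ factorization still covers that case. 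What the paper's computation buys in exchange is the explicit spectral data for all three eigenvalues and eigenvectors, not just the $\lambda=1$ branch.
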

\begin{proof}
Let us derive the characteristic polynomial for eigenvalues of the 
step operator on the subspace corresponding to the $\ket{\phi_k}$ on the 
position space
\begin{equation}
\overline{U_k}=\diag(e^{-\ii k},e^{\ii k},e^{-\ii ka}) G.
\end{equation}
From assumptions we obtain that $n|la$ and thus $e^{\ii ka}=1$. Thus the 
characteristic 
polynomial 
simplifies to 
\begin{equation}
(1-\lambda)\left(\frac{\lambda}{3}(1-2\cos k)+1+\lambda+\lambda^2\right),
\end{equation}
with real coefficients and the same solution for $-k$ thus $\lambda_0=1$ and 
lemma holds.
The explicit formulas for the eigenvalues with substitution for $e^{\ii 
k}=\omega$  are
$$
\left\{1,-\frac{\omega ^2+4 \omega +(\omega -1) \sqrt{\omega  (\omega
+10)+1}+1}{6 \omega },-\frac{\omega ^2+4 \omega -(\omega -1)
\sqrt{\omega  (\omega +10)+1}+1}{6 \omega }\right\}
$$
and eigenvectors without normalization factors read
\begin{equation}
\left(
\begin{array}{ccc}
 \frac{2}{\omega +1} & \frac{2 \omega }{\omega +1} & 1 \\
 \frac{-\omega +\sqrt{\omega  (\omega +10)+1}-1}{4 \omega } & \frac{1}{4} \left(-\omega -\sqrt{\omega  (\omega +10)+1}-1\right) & 1 \\
 -\frac{\omega +\sqrt{\omega  (\omega +10)+1}+1}{4 \omega } & \frac{1}{4} \left(-\omega +\sqrt{\omega  (\omega +10)+1}-1\right) & 1 \\
\end{array}
\right)
\end{equation}
\end{proof}

\begin{proof}[Proof of Theorem \ref{th:cyclic}]
In order to prove the cyclic property of the walk we consider the 
limiting distribution from Def, \ref{def::limiting} in the alternative 
form
\begin{equation}\label{eq:limiting-formula}
\Pi(x)=\sum_{\lambda}\sum_{(k,j),(k',j')\in V_\lambda ,d}a_{k,j}^{\phantom*} 
a_{k',j'}^{*} \braket{d,x}{\psi_{k,j}}\braket{\psi_{k',j'}}{d,x}.
\end{equation}

We aim at proving that $\Pi(x)=\Pi(x+a)$.
We note that 
for each eigenvalue $\lambda$ such that $|V_{\lambda}|=1$ its 
unique 
eigenvector $\ket{\psi_{k,j}}$ satisfies
$|\braket{d,x}{\psi_{k,j}}|^2=|\braket{d,x+a}{\psi_{k,j}}|^2$
as a result of Lemma 
\ref{lem:eigenvalues}.
We assume that multiple eigenvalues $\lambda_{k,j}=\lambda_{k',j'}$ follow the 
case $k=\frac{2 \pi l}{\mathrm{GCD}(a, n)}, k'=\frac{2 \pi 
(n-l)}{\mathrm{GCD}(a, 
n)}$ resulting from 
construction in Lemma \ref{lem:multi-eigenvalues} and give the 
$e^{\ii ka}=e^{\ii k'a}=1$ and $e^{\ii k}e^{\ii k'}=1$.
In particular, we show that the relative phases that occur in the terms of the sum 
Eq. (\ref{eq:limiting-formula}) vanish every $a$ steps.
Thus the equality
\begin{equation}
\braket{d,x+a}{\psi_{k,j}}\braket{\psi_{k',j'}}{d,x+a} = 
\braket{d,x}{\psi_{k,j}}\braket{\psi_{k',j'}}{d,x}
\end{equation}
holds for $(k,j),(k',j')\in V_\lambda$ and overall limiting probability is 
periodic with the period equal to 
 $\mathrm{GCD}(a, n)$.
\end{proof}

Thus, we observe interesting phenomena that if the paths generated by the lively
steps do not interact \ie\ create separate classes of nodes, then the asymptotic
probabilities become equal within these classes. This behavior is illustrated
in Fig.~\ref{fig:asymptotic-classes}. One should note that the values of the 
limiting distribution depends on the initial state. However, the periodicity is 
not affected by the initial state.

\begin{figure}[ht!]
\centering
    \subfigure[Aperiodic case\label{fig:classes1}]{
    \begin{tikzpicture}

    \def \n {7}
    \def \radius {2.2cm}

    \foreach \s in {1,...,\n}
    {   
    \pgfmathtruncatemacro{\label}{\n-\s};
    \node[draw, circle] (\label) at ({360/\n * (\s+2) }:\radius) {};
    }

    \draw[dashed,thick] (6) to (0);

    \def \n {6}
    \foreach \s in {1,...,\n}
    {
        \pgfmathtruncatemacro{\ss}{\s-1};
        \draw[dashed,thick] (\ss) to (\s);
    }

    \def \n {4}
    \foreach \s in {0,...,\n}
    {   
        \pgfmathtruncatemacro{\ss}{\s+2};
        \draw[dashed,red,thick] (\s) to (\ss);
    }

    \draw[dashed,red,thick] (5) to (0);
    \draw[dashed,red,thick] (6) to (1);
    \end{tikzpicture}
    }
    \subfigure[Periodic case\label{fig:classes2}]{
    \begin{tikzpicture}

    \def \n {6}
    \def \radius {2.2cm}

    \foreach \s in {1,...,\n}
    {   
        \pgfmathtruncatemacro{\label}{\n-\s};
        \node[draw, circle] (\label) at ({360/\n * (\s+2) }:\radius) {};
    }

    \draw[dashed,thick] (5) to (0);

    \def \n {5}
    \foreach \s in {1,...,\n}
    {
        \pgfmathtruncatemacro{\ss}{\s-1};
        \draw[dashed,thick] (\ss) to (\s);
    }
  
    \draw[dotted,thick,red] (0) to (2);
    \draw[dotted,thick,red] (2) to (4);
    \draw[dotted,thick,red] (4) to (0);
    
    \draw[dashdotted,thick,blue] (1) to (3);
    \draw[dashdotted,thick,blue] (3) to (5);
    \draw[dashdotted,thick,blue] (5) to (1);

    \end{tikzpicture}
    }
    
    \caption{Illustration of the simple networks for the lively walk with $n=6$
    and $n=7$ nodes and the liveliness $a=2$. In the case \subref{fig:classes1}
    all nodes belong to the same class. In the case \subref{fig:classes2} red
    (dotted) and blue (dot-dashed) links connect the nodes corresponding to
    different classes exhibiting different asymptotic behavior.}
    \label{fig:asymptotic-classes}
\end{figure}
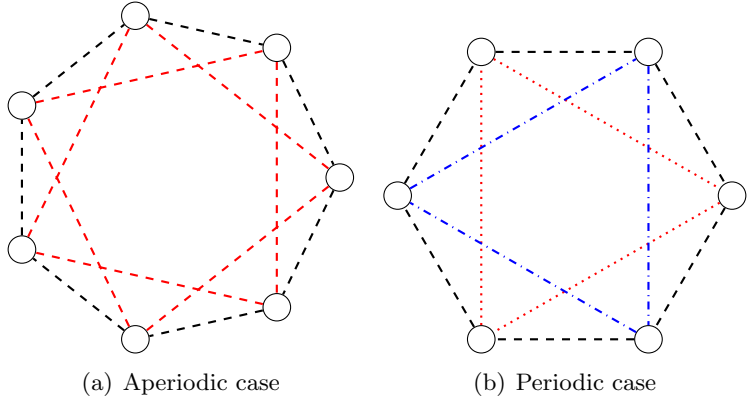

\section{Additional properties of lively walks}\label{sec:liveliness}
In this section we focus on some additional properties of the
introduced model. We start by demonstrating that the model enables us to avoid
the trapping of the particle for any coin. Next we study how the periodicity of
the lively walks is disturbed if one of the links in the network is missing. 

\subsection{Mean difference of position}

Let us consider the situation when we have two parties (or
players) using the introduced model to execute a quantum walk on a network. The
first player aims at exploring the network, whilst the other one aims at
stopping the exploration by trapping the quantum walk. The second party can
choose the coin used during the walk. Below we demonstrate that in such
situation the introduced model is not vulnerable for the actions of the second
party and the trapping can be avoided for any coin.
\def\Expectation{\ensuremath{\mathbb{E}}}

To be more precise we introduce a random variable $(X^\rho, D^\rho)$ that 
models the measurement in canonical basis of the joint position and coin 
register. Using this concept trapping of a quantum walk would mean that the 
probability distribution of the variable $X^{\rho_t}$, for $\rho_t$ being the 
state of the system in time $t$, does not change in time. In particular the 
expectation value $\Expectation(X^{\rho_t})$ would be time-independent.
Thus, in order to avoid trapping we will ensure that the expectation value of 
the position changes in time.
\begin{theorem}
Let $\mathcal{H}_C=\mathbb{C}^3$, $\mathcal{H}_P=\mathbb{C}^n$ be Hilbert 
spaces and $\mathcal{D}(\mathcal{H}_C)$ and $\mathcal{D}(\mathcal{H}_P)$ spaces 
of density operators corresponding to them. Let us suppose that we have a
lively walk with an initial state $\rho_0\in \mathcal{D}(\mathcal{H}_C 
\otimes\mathcal{H}_P)$ in the form  $\rho_0=  
\frac{1}{3}(\ketbra{0}{0}+\ketbra{1}{1}+\ketbra{2}{2})\otimes{\ketbra{x_0}{x_0}}$.
Then for 
arbitrary (time-dependent) three-dimensional coin operator $C_t$ the lively 
walk with liveliness $a\neq 0$,  can not be trapped 
\ie~the difference of the expectation value of the position 
$\Expectation(X^{\rho_{t+1}})-\Expectation(X^{\rho_{t}})$ in time $t$ is 
non-zero.
\end{theorem}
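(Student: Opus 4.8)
The plan is to compute the one-step change in the expectation value of the position directly, exploiting the very simple, maximally-mixed-in-the-coin form of the initial state. Write $\rho_t = U^t \rho_0 (U^\dagger)^t$ with $U = S^{(n,a)}(C_t\otimes \Id_n)$ (allowing $C_t$ to depend on $t$). Since $\Expectation(X^{\rho_{t}}) = \Tr\big[(\Id_3\otimes \hat X)\rho_t\big]$ with $\hat X = \sum_{x} x\,\ketbra{x}{x}$, the difference $\Delta_t := \Expectation(X^{\rho_{t+1}})-\Expectation(X^{\rho_{t}})$ equals $\Tr\big[(\Id_3\otimes\hat X)(\rho_{t+1}-\rho_t)\big]$. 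The key structural observation is that the coin operator $C_t\otimes\Id_n$ commutes with $\Id_3\otimes\hat X$ and acts only on the coin, so the position-displacement is entirely produced by $S^{(n,a)}$; concretely, $S^{(n,a)\dagger}(\Id_3\otimes\hat X)S^{(n,a)} = \Id_3\otimes\hat X + (\ketbra{0}{0}\otimes(-\Id_n') + \ketbra{1}{1}\otimes(+\Id_n') + \ketbra{2}{2}\otimes(a\,\Id_n'))$ modulo the boundary ``wrap-around'' corrections at $x=n-1,0$. This lets me express $\Delta_t$ as a sum of three terms weighted by the coin-register populations of $\rho_t$ in the states $\ket{0},\ket{1},\ket{2}$.

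Next I would handle the modular wrap-around carefully, since it is the one place the naive ``$\Delta_t = -p_0 + p_1 + a\,p_2$'' formula fails: when the walker at position $n-1$ moves right it jumps to $0$, contributing $-(n-1)$ rather than $+1$, and similarly for the other directions. Rather than fight these corrections, I would instead argue that for $t=0$ the wrap-around is irrelevant (the initial state is a position eigenstate $\ketbra{x_0}{x_0}$, so one can choose representatives so that $x_0$ and its three images lie in a window without wraparound), and then show the general-$t$ statement follows because the quantity of interest is really a statement about a single step starting from an arbitrary $\rho_t$. The cleanest route: reduce the theorem to proving that for \emph{any} state of the form $\rho = \sigma_C\otimes\ketbra{y}{y}$ with $\sigma_C$ diagonal — but $\rho_t$ is not of this form for $t\ge 1$ — so this naive reduction does not work, and I must instead track the coin populations of $\rho_t$ directly.

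The actual engine of the proof is therefore: (i) show that the coin populations $p_c(t) := \Tr[(\ketbra{c}{c}\otimes\Id_n)\rho_t]$ evolve by the \emph{doubly stochastic} matrix built from $|C_{t}|^2$ entrywise (since $S^{(n,a)}$ is a permutation on the joint basis and does not mix coin labels, the coin marginal evolves exactly as a classical Markov chain with transition matrix $[(C_t)_{cc'}]$ modulus-squared), and in particular that $p_0(t)+p_1(t)+p_2(t)=1$ for all $t$; (ii) establish the displacement identity $\Delta_t = \sum_{c} p_c(t)\,\delta_c \;-\; n\cdot(\text{wrap probability mass})$ where $\delta_0=-1,\delta_1=+1,\delta_2=a$; (iii) conclude. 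The honest difficulty is that step (ii)'s wrap-around correction can a priori cancel the bulk term, so the real content is showing $\Delta_t\ne 0$ \emph{for every} coin. I expect the authors resolve this not by a messy case analysis but by observing that $\Delta_t$ taken modulo $n$ is forced: $\Delta_t \equiv -p_0(t)+p_1(t)+a\,p_2(t) \pmod n$ is not the clean statement either, since $p_c$ are not integers. The likely clean argument is: consider $\Expectation(X^{\rho_t})$ only through $t=1$, where $\rho_1 = S^{(n,a)}(C_0\otimes\Id)\rho_0(\cdots)^\dagger$ and $\rho_0$'s coin part is $\tfrac13\Id_3$, hence is invariant under $C_0$, so $\rho_1 = S^{(n,a)}(\tfrac13\Id_3\otimes\ketbra{x_0}{x_0})S^{(n,a)\dagger} = \tfrac13\sum_c \ketbra{c}{c}\otimes\ketbra{x_0+\delta_c}{x_0+\delta_c}$, giving $\Expectation(X^{\rho_1}) = \tfrac13\big((x_0-1)+(x_0+1)+(x_0+a)\big) \bmod n$-corrected $= x_0 + \tfrac{a}{3}$ when no wrap occurs, which differs from $\Expectation(X^{\rho_0})=x_0$ precisely when $a\ne 0$. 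The main obstacle, and the step I would spend the most care on, is arguing that this $t=1$ computation \emph{suffices} for the theorem as stated — i.e.\ that the theorem's claim "$\Delta_t\ne 0$" is really being asserted at $t=0$ under the given initial-state hypothesis, and that the wrap-around does not accidentally make $\tfrac{a}{3}$ an integer multiple of $n$ forcing the averaged value back to $x_0 \bmod n$ (which it cannot, since $0 < a/3 < n$ and more precisely $a/3$ is not an integer multiple of $n$ for $1\le a\le n-1$, and even the wrap-corrected value stays in a fixed nonzero residue class mod $\gcd$ considerations).
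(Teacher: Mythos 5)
There is a genuine gap, and it sits exactly where you placed your step (i). Your claim that ``the coin marginal evolves exactly as a classical Markov chain with transition matrix $[|C_t|^2]$'' is false in general: the coin populations at time $t+1$ are $p_i(t+1)=\bra{i}C_t\,\sigma_t\,C_t^\dagger\ket{i}$ with $\sigma_t=\Tr_P\rho_t$, so they depend on the \emph{off-diagonal} coherences of the coin-reduced state, not only on $p_c(t)$; moreover the coin-reduced dynamics is not closed, because after the shift the coin coherences $\sum_x\bra{i,x-\delta_i}\tau\ket{j,x-\delta_j}$ pick up coin--position correlations that the marginal $\sigma_t$ does not record. So you cannot conclude $p_c(t)=1/3$ for all $t$ this way, and your fallback --- computing only $t=1$ and suggesting the theorem is ``really'' a statement at $t=0$ --- does not prove what is claimed: the theorem asserts $\mathbb{E}(X^{\rho_{t+1}})-\mathbb{E}(X^{\rho_t})\neq 0$ at \emph{every} $t$, for an arbitrary time-dependent coin, which is precisely the non-trapping content.

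The missing idea (and the engine of the paper's proof) is a stronger inductive invariant stated in a basis in which the shift is diagonal on the position register. Writing $S^{(n,a)}=\sum_c\ketbra{c}{c}\otimes U_c^{(n,a)}$ and letting $\{\ket{e_m}\}$ be the common (Fourier) eigenvectors of the commuting cyclic shifts $U_c^{(n,a)}$, one shows by induction that for all $t$
\begin{equation}
\bra{i,e_m}\rho_t\ket{j,e_m}=\tfrac{1}{3}\,a_m\,\delta_{ij},\qquad a_m=|\braket{e_m}{x_0}|^2 ,
\end{equation}
i.e.\ every position-Fourier diagonal block of $\rho_t$ is proportional to $\Id_3$. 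The coin conjugation preserves this because $\sum_c\bra{i}C_t\ket{c}\bra{c}C_t^\dagger\ket{j}=\delta_{ij}$, and the shift only multiplies these (already diagonal) blocks by phases $e^{\ii(\alpha_{i,m}-\alpha_{j,m})}$. Tracing out the position register then gives $\Tr_P\rho_t=\Id_3/3$ for every $t$, so at each step the walker takes the moves $-1,+1,+a$ with probability exactly $1/3$ regardless of $C_t$, and the drift per step is $\tfrac{1}{3}(-1+1+a)=\tfrac{a}{3}\neq 0$. Your final formula $a/3$ and your $t=1$ computation agree with this, but without the block-diagonal invariant (or an equivalent argument controlling the coin marginal at all times) the proof does not go through. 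As a side remark, your worry about the mod-$n$ wrap-around is not what separates the two arguments: the paper likewise treats the per-step displacement linearly and does not address the cyclic boundary in the expectation, so that issue is orthogonal to the real gap.
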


\noindent One should note that in principle $C_t$ can change at each step of the
evolution. For this reason this theorem can be applied for the model with the
time-dependent coin operator.

\medskip

\begin{proof}
	First, we note that the shift operator from Eq.~(\ref{eq:shift}) may be 
	represented in a different form
	\begin{equation}
	S^{(n,a)} = \sum\limits_{c=0}^2\ketbra{c}{c}\otimes U_c^{(n,a)},
	\end{equation}
	where $U_c^{(n,a)}=\sum_x\ketbra{x+\Delta_c^{(a)}}{x}$, 
	$\Delta_c^{(a)}=-\delta_{c,0}+\delta_{c,1}+a\delta_{c,2}$ and we perform 
	addition modulo $n$.
	We also note that each pair of matrices $U_c^{(n,a)}$ 
	commute. Let $\{\ket{e_m}\}_{m=1}^{n}$ be eigenvectors of matrices 
	$U_c^{(n,a)}$.
	By mathematical induction we will show that state after arbitrary number of 
	evolutions is of the form
	
	\begin{alignat}{1}
	\rho = 
	\sum\limits_{ij}\sum\limits_{ml}\rho_{ij,ml}\ketbra{i}{j}\otimes\ketbra{e_m}{e_l},
	\label{eq:ind_assumption}
	\end{alignat}
	where 
	\begin{equation}
	\rho_{ij,mm} =\frac{1}{3}\delta_{ij}a_m \textrm{ with }\ a_m= 
	\braket{e_m}{x_0} 
	\braket{x_0}{e_m}
	\textrm{ and } \ \sum\limits_{m}a_m=1.
	\label{eq:ind_assumption2}
	\end{equation}
	
	Let us show that the relation in 
	Eq.~(\ref{eq:ind_assumption2}) are
	valid for the initial state. Matrix elements of $\rho_0$ are of the form 
	\begin{equation}
	\begin{split}
	\bra{i,e_m}\rho_0\ket{j,e_m}&=\bra{i,e_m} 
	\left(\frac{\1_3}{3}\otimes\ket{x_0} 
	\bra{x_0}\right)\ket{j,e_m}\\
	& =\frac{1}{3}\braket{i}{j}\otimes\braket{e_m}{x_0} 
	\braket{x_0}{e_m}=\frac{1}{3}a_m\delta_{ij},
	\label{exeption_initial}
	\end{split}
	\end{equation}
	where $\ket{i}, \ket{j}$, $i,j = 0,1,2$, are the states of the canonical coin basis.

	Next we will demonstrate inductive step. 
	\begin{equation}
	\begin{split}
	\bra{i,e_m}U\rho U^\dagger\ket{j,e_m}&=\bra{i,e_m}S^{(n,a)}(C_t\otimes\Id_n)\rho
	(C_t^\dagger\otimes\Id_n)S^{(n,a)\dagger}\ket{j,e_m}\\
	&=e^{\ii \alpha_{i,m}}\bra{i,e_m}(C_t\otimes\Id_n)\rho(C_t^\dagger
	\otimes\Id_n) \ket{j,e_m}e^{-\ii\alpha_{j,m}}\\
	&=e^{\ii 
	(\alpha_{i,m}-\alpha_{j,m})}\sum\limits_{c,c'}(\bra{i}C_t\ket{c}\bra{c,e_m})
	\rho(\bra{c'}C_t^\dagger\ket{j}\ket{c',e_m})\\
	&=e^{\ii (\alpha_{i,m}-\alpha_{j,m})}\sum\limits_{c,c'}\bra{i}C_t\ket{c}
	\bra{c'}C_t^\dagger\ket{j}\bra{c,e_m}
	\rho\ket{c',e_m}\\
	&= e^{\ii (\alpha_{i,m}-\alpha_{j,m})}           
	\frac{1}{3}a_m\sum\limits_{c}\bra{i}C_t\ket{c}\bra{c}C_t^\dagger\ket{j}\\
	&= e^{\ii (\alpha_{i,m}-\alpha_{j,m})}           
	\frac{1}{3}a_m\delta_{ij}=\frac{1}{3}a_m\delta_{ij},
	\end{split}
	\end{equation}
	where $e^{\ii \alpha_{i,m}}$ is eigenvalue of $S$ which is corresponding to 
	eigenvector $\ket{i,e_m}$.

	Now we can show that the state from Eq.~(\ref{eq:ind_assumption}) after 
	partial 
	trace on position register is of the form $\frac{\Id_3}{3}$.
	\begin{equation}
	\begin{split}
	\Tr_P(\rho)&=\sum\limits_k 
	\Id_3\otimes\bra{e_k}\Big(\sum\limits_{ij}\sum\limits_{ml}
	\rho_{ij,ml}\ketbra{i}{j}\otimes\ketbra{e_m}{e_l}\Big)\Id_3\otimes\ket{e_k}\\
	&=\sum\limits_k\sum\limits_{ij}\sum\limits_{ml}\rho_{ij,ml}\ketbra{i}{j}
	\braket{e_k}{e_m}\braket{e_l}{e_k}\\
	&=\sum\limits_{ij}\sum\limits_{m}\frac{1}{3}a_m\delta_{ij}\ketbra{i}{j}=\frac{\1_3}{3}.
	\end{split}
	\end{equation}
	
	The above result implies that after an arbitrary number of steps, a quantum 
	walker moves right, left or jumps with equal probabilities.
	
	Let $X^{\rho_{t}},D^{\rho_{t}}$ be random variables corresponding to the 
	measurement on 
	the position and
	coin register after $t$ steps, respectively.
	 Furthermore let $\rho_t^C=(C\otimes \1_n)\rho_t (C\otimes 
	\1_n)^\dagger.$ We have

	\begin{equation}
	\begin{split}
	\Expectation(X^{\rho_{t+1}})&=\frac{1}{3}\Expectation(X^{\rho_{t+1}}| 
	D^{\rho_{t+1}}=0)
	+\frac{1}{3}\Expectation(X^{\rho_{t+1}}|D^{\rho_{t+1}}=1)
		+\frac{1}{3}\Expectation(X^{\rho_{t+1}}|D^{\rho_{t+1}}=2)\\
	&=\frac{1}{3}\Big(\Expectation(X^{\rho_{t}^C}|D^{\rho_{t}^C}=0)-1\Big)
	+\frac{1}{3}\Big(\Expectation(X^{\rho_{t}^C}|D^{\rho_{t}^C}=1)+1\Big)
	+\frac{1}{3}\Big(\Expectation(X^{\rho_{t}^C}|D^{\rho_{t}^C}=2)+a\Big)\\
	&=\frac{1}{3}\Big(\Expectation(X^{\rho_{t}^C}|D^{\rho_{t}}=0)+\Expectation(X^{\rho_{t}^C}|D^{\rho_{t}}=1)+
	\Expectation(X^{\rho_{t}^C}|D^{\rho_{t}}=2)\Big)+\frac{1}{3}\Big(-1+1+a\Big)\\
	&=\Expectation(X^{\rho_{t}^C})+\frac{1}{3}(-1+1+a)\\
	&=\Expectation(X^{\rho_{t}})+\frac{a}{3},
	\end{split}
	\end{equation} 
	where $P(D^{\rho_{t+1}}=i)=\frac{1}{3}$ for all $i\in\{0,1,2\}$.
	This proves the claim.    
\end{proof}

One should note that this enables the user to obtain an arbitrary change of
position. This does not depend on whether the $\mathrm{GCD}(a,n)>1$.

\subsection{Networks with broken links}
As the second application of the introduced model we describe a simple method of
detecting link failures in the network~\cite{feldman10structural}.
 
Let us consider the network delivering the connection for the implementation of
the lively quantum walk. In such situation the limiting distribution will have
the properties described in Section~\ref{sec:periodicity}.

Let us now consider a failure of the network, which can be described as a lack
of one of the links (see~Fig.~\ref{fig:broken-link}). In this case the
\emph{broken link} can be understood as a connection error or as an action of a
malicious party.

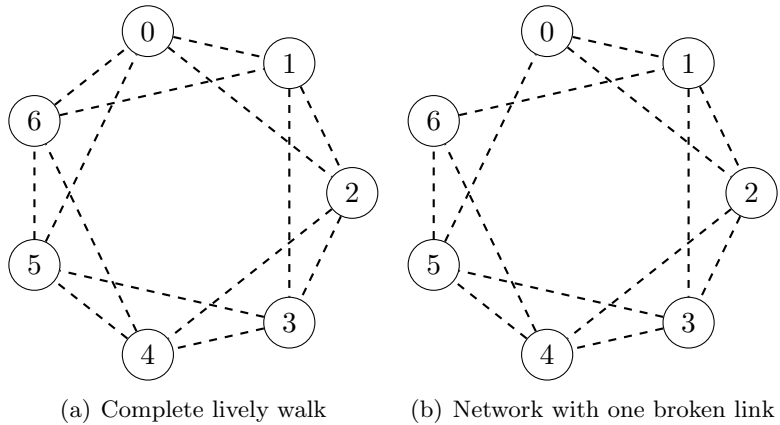
\begin{figure}[ht!]
\centering
\subfigure[Complete lively walk\label{fig:cycle-full}]{
\begin{tikzpicture}

\def \n {7}
\def \radius {2.2cm}

\foreach \s in {1,...,\n}
{   
    \pgfmathtruncatemacro{\label}{\n-\s};
    \node[draw, circle] (\label) at ({360/\n * (\s+2) }:\radius) {\label};
}

\draw[dashed,thick] (6) to (0);

\def \n {6}
\foreach \s in {1,...,\n}
{
    \pgfmathtruncatemacro{\ss}{\s-1};
    \draw[dashed,thick] (\ss) to (\s);
}

\def \n {4}
\foreach \s in {0,...,\n}
{   
    \pgfmathtruncatemacro{\ss}{\s+2};
    \draw[dashed,thick] (\s) to (\ss);
}

\draw[dashed,thick] (5) to (0);
\draw[dashed,thick] (6) to (1);
\end{tikzpicture}
}
\subfigure[Network with one broken link\label{fig:cycle-broken}]{
\begin{tikzpicture}

\def \n {7}
\def \radius {2.2cm}

\foreach \s in {1,...,\n}
{   
    \pgfmathtruncatemacro{\label}{\n-\s};
    \node[draw, circle] (\label) at ({360/\n * (\s+2) }:\radius) {\label};
}
\def \n {6}
\foreach \s in {1,...,\n}
{
    \pgfmathtruncatemacro{\ss}{\s-1};
    \draw[dashed,thick] (\ss) to (\s);
}

\def \n {4}
\foreach \s in {0,...,\n}
{   
    \pgfmathtruncatemacro{\ss}{\s+2};
    \draw[dashed,thick] (\s) to (\ss);
}
\draw[dashed,thick] (5) to (0);
\draw[dashed,thick] (6) to (1);
\end{tikzpicture}
}

    \caption{Illustration of the simple network with one broken link for the
    lively walk with $n=7$ nodes and $a=2$. For the standard quantum walker the
    broken cycle (case~\subref{fig:cycle-broken}) is equivalent to the line
    segment.}
    \label{fig:broken-link}
\end{figure}

We can define the model used to describe a lively walk on a cycle with one broken
link. We assume that the walker is able to execute moves with $a=2$. In such
case the shift operator is given as
\begin{equation}
\begin{split}
S_B^{n,2} = &\sum_{x=1}^{n-2} S_{x}^{n,2}
+ \ketbra{1}{0}\otimes\ketbra{0}{0}
+ \ketbra{1}{1}\otimes\ketbra{1}{0}
+ \ketbra{2}{2}\otimes\ketbra{2}{0} \\
&+ \ketbra{0}{1}\otimes\ketbra{n-1}{n-1}
+ \ketbra{0}{0}\otimes\ketbra{n-2}{n-1}
+ \ketbra{2}{2}\otimes\ketbra{0}{n-2}
\end{split}.
\label{eqn:lively-broken}
\end{equation}

\begin{figure}[ht!]
    \centering
    
\subfigure[$a=5$\label{fig:fullbroken5}]{\includegraphics{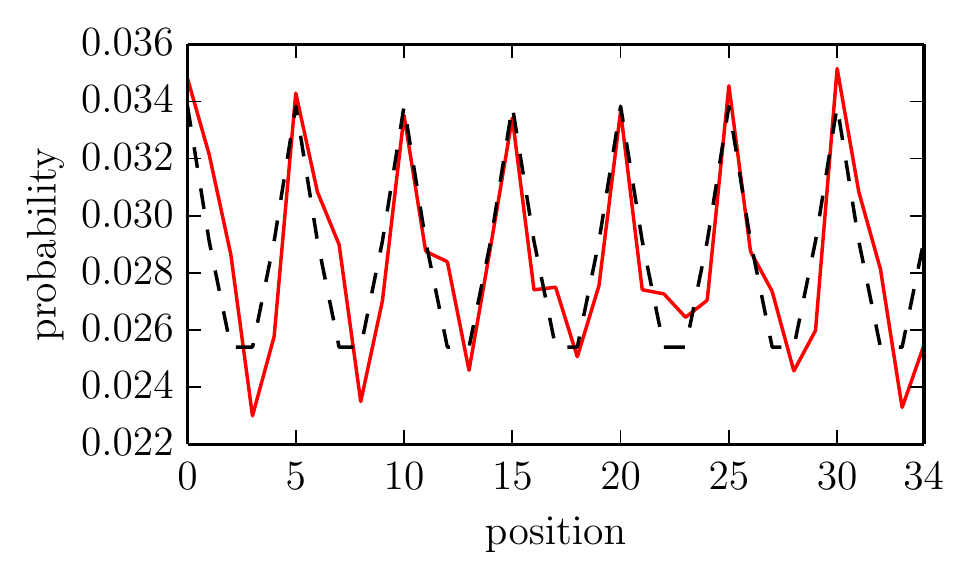}}
    
\subfigure[$a=7$\label{fig:fullbroken7}]{\includegraphics{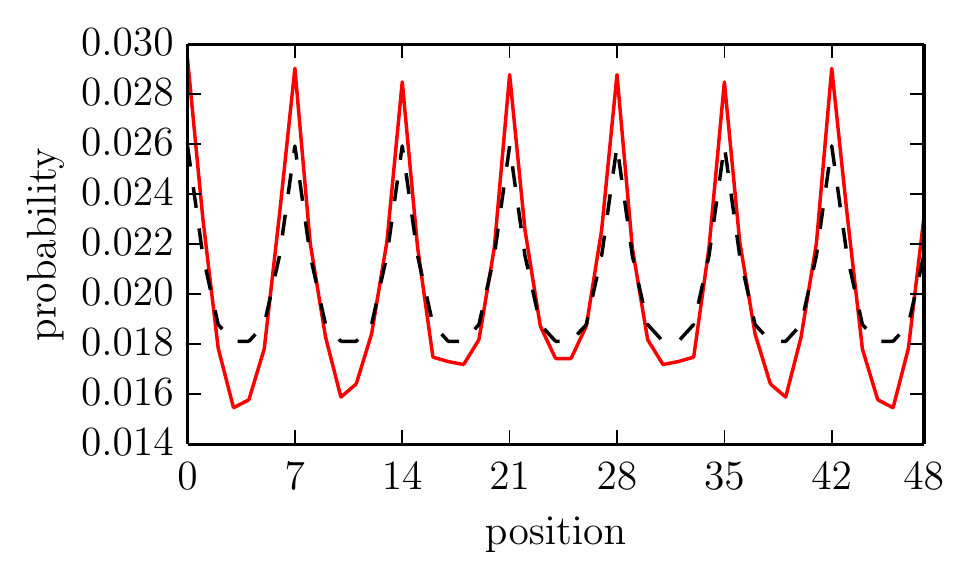}}\\
    
\subfigure[$a=11$\label{fig:fullbroken11}]{\includegraphics{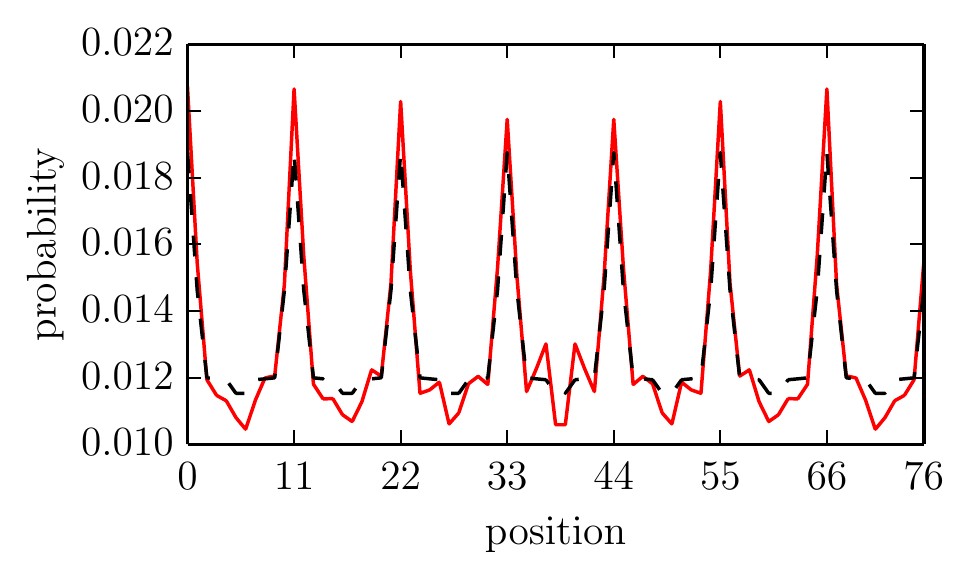}}
    \caption{Influence of one broken link on the time-averaged limiting
    probability distribution. Limiting distributions for the cases without
    broken link are plotted in black dotted line, whlist limiting 
distributions 
    for the cases with one broken link are plotted in solid red line.
    The periodicity of the time-averaged
    limiting distribution is disturbed around the nodes without the connection.
    Here the broken link is located between nodes 18 and 19 (case
    \subref{fig:fullbroken5}), 24 and 25 (case \subref{fig:fullbroken7}), and 
38 and 39
    (case \subref{fig:fullbroken11}), respectively and the initial state is of 
the 
    form $\frac{1}{\sqrt{3}}(\ket{0}+\ket{1}+\ket{2})\otimes\ket{0}$.} 
    \label{fig:broken-limiting}
\end{figure}

The second part of Eq.~(\ref{eqn:lively-broken}) can be interpreted as the
laziness condition -- \eg\ in position $\ket{0}$ with coin
pointing in the direction of a broken link, the walker does not move and the
coin is changed from $\ket{0}$ to $\ket{1}$. However, one should note that due to the
possibility of executing steps with a larger range, the cycle with broken links
is not equivalent to a line segment.

The time-averaged limiting distribution in this situation is presented in
Fig.~\ref{fig:broken-limiting}. One can easily observe that the situation where
one of the links is missing has a significant
impact on the periodicity of the limiting distribution. The
disturbance is particularly strong around the location of the broken link.



\begin{figure}[ht!]
    \centering
	\subfigure[$a=5$\label{fig:chernoff5}]{\includegraphics{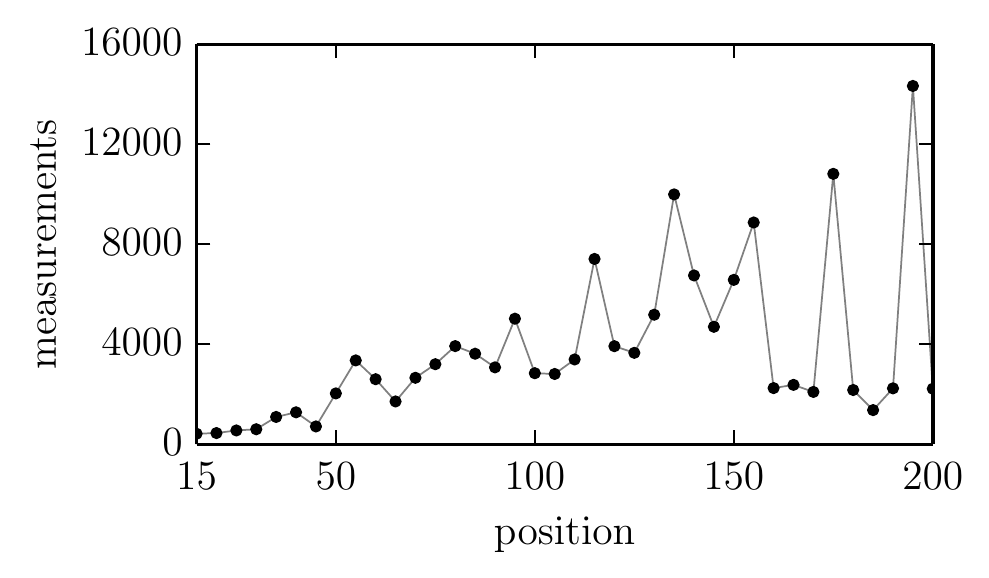}}
	\subfigure[$a=7$\label{fig:chernoff7}]{\includegraphics{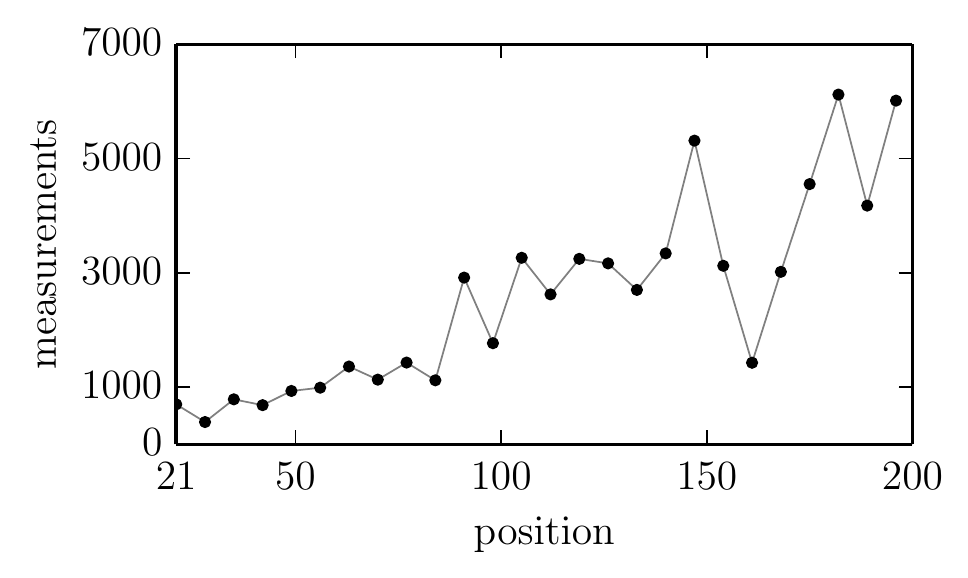}}\\
	\subfigure[$a=11$\label{fig:chernoff11}]{\includegraphics{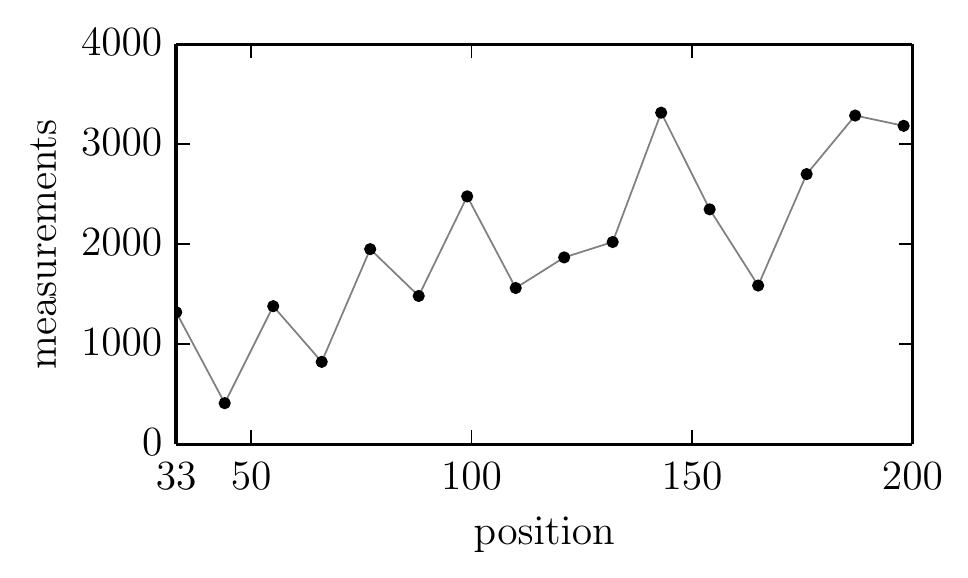}}
    \caption{Number of measurement required to assess if the
    network has one broken link. The plots represent results obtained for
    networks with number of nodes being a multiplicity of the liveliness
    parameter. Broken links are always between $\lfloor \frac{n}{2}\rfloor$ and 
    $\lfloor \frac{n}{2}\rfloor+1$ nodes, where $n$ is number of nodes and the 
    initial state is of the 
    form$\frac{1}{\sqrt{3}}(\ket{0}+\ket{1}+\ket{2})\otimes\ket{0}$.}
    \label{fig:chernoff}
\end{figure}

Suppose we know the limiting distribution for the networks
without and with a broken link, and denote them respectively $P_0$ and $P_1$. We
want to asses if the given network has a broken link. In other words we need to
make a statistical test which tells us if the considered model has distribution
$P_0$ or $P_1$. Our goal is to estimate how many measurements are needed to
discover one broken link and how this number depends on the size of the
network.

This problem is equivalent to finding the relationship between
the best ability to distinguish between statistical hypothesis ($H_0:\ P_0$ and
$H_1:\ P_1$) and the size of a statistical sample (the number of measurements)
\cite{baigneres2008complexity}. The ability of distinguishing hypothesis is
called the advantage and is expressed as
\begin{equation}
Adv_{q}(P_0,P_1)=|1-\alpha - \beta|,
\end{equation}
where $q$ denotes a number of statistical samples, and $\alpha$
and $\beta$ are errors of type I and II, respectively. 
It turns out that there is a connection between the number of samples and the 
best advantage for
distinguishing between distributions $P_0$ and $P_1$
\begin{equation}
\label{bestAdvantage}
1-BestAdv_q(P_0,P_1)\doteq2^{-qC(P_0,P_1)},
\end{equation}
where $\doteq$ denotes that two functions are asymptotically
equivalent and $C(P_0,P_1)$ is called Chernoff information
\begin{equation}
\label{eq:Chernoff_inf}
C(P_0,P_1)=-\inf\limits_{0<\lambda<1}\log\sum\limits_i P_0(i)^\lambda 
P_1(i)^{1-\lambda}.
\end{equation}
From Eq. (\ref{bestAdvantage}) we obtain that if we have
$q\approx\frac{1}{C(P_0,P_1)}$ samples, then we have the maximal ability to
distinguish $P_0$ from $P_1$.

The number of measurements required in the case of lively quantum
walk with one broken link is presented in Fig.~\ref{fig:chernoff}. It is 
important to say, that results presented in Fig.~\ref{fig:chernoff} are 
obtained by numerical approximation of Chernoff 
information~(\ref{eq:Chernoff_inf}). One can
observe that for large networks this number decreases with the growing
liveliness parameter. This reflects the fact that the smaller networks and for
the small liveliness parameter, the density of connections is higher. In such
case one broken link does not have big impact on the network structure and, as
the result, on the time-averaged limiting distribution. On the other hand, for
larger networks and large liveliness parameter, the density of connections
becomes smaller. This results in more significant impact of the broken
connection on the observed behavior of the walk.

\section{Concluding remarks}\label{sec:concluding}
In this work we have introduced and studied a parametrized model
of a quantum walk on cycle. The introduced model can be used to
study the situation where the near-neighbor communication in the network is
supplemented by the existence of the long-range links between the selected
nodes.

The introduced model displays the periodicity of the
time-averaged limiting distribution. We have proved that the periodicity of the limiting
distribution is connected with the liveliness parameter.

The existence of additional connections enables the
utilization of the introduced model for the purpose of quantum network
exploration.
In particular, the additional connection allows avoiding the
trapping of the walker for any choice of the coin. This makes the lively walk
resistant to the actions of a malicious party disturbing the programme of the
quantum walker exploring the network.

Thanks to the introduction of additional connections, the
lively walk can preserve its properties in the situation when one of the links
is missing. This represents the situation when a structural error in the network
occurs. We have shown that such errors disturb the periodicity properties of the
introduced model. Moreover, one can argue that the additional connections can be
beneficial from the point of view of quantum walk integrity. This is due to the
fact that the additional connections make the network more resistant to broken
links.

\paragraph*{Acknowledgements}
Work of PS has been supported by Polish National Science Centre under the
research project 2013/11/N/ST6/03030, JAM has been supported by Ministry of
Science and Higher Education under Iuventus Plus project IP 2014 031073 and MO
has been supported by Polish National Science Centre under the research project
2011/03/D/ST6/00413.

Authors would like to thank {\L.}~Pawela and Z.~Pucha{\l}a for helpful
suggestions and discussions during the preparation of the manuscript.

\bibliographystyle{unsrt}
\bibliography{lively_qwalks_cycles}

\begin{thebibliography}{10}

\bibitem{reitzner12walks}
D.~Reitzner, D.~Nagaj, and V.~Bu\v{z}ek.
\newblock Quantum walks.
\newblock {\em Acta Phys. Slovaca}, 61(6):603--725, 2012.

\bibitem{vanegas12review}
S.E. Venegas-Andraca.
\newblock Quantum walks: a comprehensive review.
\newblock {\em Quantum Information Processing}, 11(15):1015--1106, 2012.

\bibitem{portugal-book}
R.~Portugal.
\newblock {\em Quantum Walks and Search Algorithms}.
\newblock Springer-Verlag New York, U.S.A., 2013.

\bibitem{ambainis03algorithmic}
A.~Ambainis.
\newblock Quantum walks and their algorithmic applications.
\newblock {\em Int. J. Quant. Inf.}, 1:507--518, 2003.

\bibitem{difranco11mimicking}
C.~Di~Franco, M.~McGettrick, and Th. Busch.
\newblock Mimicking the probability distribution of a two-dimensional grover
  walk with a single-qubit coin.
\newblock {\em Phys. Rev. Lett.}, 106:080502, 2011.

\bibitem{mcgettrick14cycles}
M.~Mc~Gettrick and J.A. Miszczak.
\newblock Quantum walks with memory on cycles.
\newblock {\em Physica A}, 399:163--170, 2014.
\newblock arXiv:1301.2905.

\bibitem{miszczak14quantum}
J.A. Miszczak and P.~Sadowski.
\newblock Quantum network exploration with a faulty sense of direction.
\newblock {\em Quantum Inform. Comput.}, 14(13\&14), 2014.
\newblock arXiv:1308.5923.

\bibitem{sadowski14efficient}
P.~Sadowski.
\newblock Efficient quantum search on {Apollonian} networks.
\newblock arXiv:1406.0339.

\bibitem{pawela15generalized}
{\L}.~Pawela, P.~Gawron, J.A. Miszczak, and P.~Sadowski.
\newblock Generalized open quantum walks on {Apollonian} networks.
\newblock {\em PLoS ONE}, 10(7):e0130967, 2015.
\newblock arXiv:1407.1184.

\bibitem{vanmeter13path}
R.~Van~Meter, T.~Satoh, T.D. Ladd, W.J. Munro, and K.~Nemoto.
\newblock Path selection for quantum repeater networks.
\newblock {\em Networking Science}, 3(1-4):82--95, 2013.

\bibitem{pawela13cooperative}
{\L}.~Pawela and J.~S{\l}adkowski.
\newblock Cooperative quantum parrondo's games.
\newblock {\em Physica D}, 256:51--57, 2013.
\newblock rXiv:1207.6954.

\bibitem{aharonov01walks}
D.~Aharonov, A.~Ambainis, J.~Kempe, and U.~Vazirani.
\newblock Quantum walks on graphs.
\newblock In {\em Proceedings of the 30th Annual ACM Symposium on Theory of
  Computation}, pages 50--59. ACM Press, 2001.

\bibitem{bednarska03walks}
M.~Bednarska, A.~Grudka, P.~Kurzy\'nski, T.~{\L}uczak, and A.~W\'ojcik.
\newblock Quantum walks on cycles.
\newblock {\em Phys. Lett. A}, 317(1-2):21--25, 2003.

\bibitem{feldman10structural}
E.~Feldman, M.~Hillery, H.-W. Lee, D.~Reitzner, H.~Zheng, and V.~Bu\v{z}ek.
\newblock Finding structural anomalies in graphs by means of quantum walks.
\newblock {\em Phys. Rev. A}, 82(4):040301, 2010.

\bibitem{baigneres2008complexity}
T.~Baigneres and S.~Vaudenay.
\newblock The complexity of distinguishing distributions (invited talk).
\newblock In {\em Information Theoretic Security}, pages 210--222. Springer,
  2008.

\end{thebibliography}

\end{document}